\def\ar{\leftarrow}
\def\rar{\rightarrow}
\def\lrar{\leftrightarrow}
\def\beq{\begin{equation}}
\def\eeq#1{\label{#1}\end{equation}}
\def\ba{\begin{array}}
\def\ea{\end{array}}
\def\G{\Gamma}
\def\D{\Delta}
\def\seq{\Rightarrow}
\def\r#1#2{\frac{\textstyle #1}{\textstyle #2}}
\newtheorem{remark}{Remark}
\newtheorem{example}{Example}
\title[Equivalence of Infinitary Formulas]
{\bf On Equivalence of Infinitary Formulas\\
under the Stable Model Semantics}
\author[Harrison, Lifschitz, and Truszczynski]{ Amelia Harrison, Vladimir
Lifschitz\\
University of Texas 
\and Miroslaw Truszczynski \\
University of Kentucky}
\begin{document}
\date{}
\maketitle

\begin{abstract}
Propositional formulas that are equivalent in intuitionistic logic, or in its
extension known as the logic of here-and-there, have the same stable models. 
We extend this theorem to propositional formulas
with infinitely long conjunctions and disjunctions and show how to apply this
generalization to proving properties of aggregates in answer set programming.
\end{abstract}

\begin{keywords}
answer set programming, semantics of aggregates, intuitionistic logic, strong
equivalence
\end{keywords}

\section{Introduction} \label {sec:intro}

This note is about the extension of the stable model semantics to infinitary
propositional formulas defined by \citeANP{tru12} \citeyear{tru12}.  
That extension,
introduced originally as a tool for proving a theorem about the logic FO(ID),
has been used also to prove a new generalization of Fages' theorem
\cite{lif12a}.

One of the reasons why stable models of infinitary formulas are important is
that they are closely related to aggregates in answer set programming (ASP). The
semantics of aggregates proposed by \citeANP{fer05}
(\citeyearNP{fer05}, Section~4.1) treats a ground
aggregate as shorthand for a propositional formula.  An aggregate with
variables has to be grounded before that semantics can be applied to it.
For instance, to explain the precise meaning of the expression $1\{p(X)\}$
(``there exists at least one object with the property~$p$'') in the body of
an ASP rule we first rewrite it as
$$1\{p(t_1),\dots,p(t_n)\},$$
where $t_1,\dots,t_n$ are all ground terms in the language of the program,
and then turn it into the propositional formula
\beq
p(t_1)\lor\cdots\lor p(t_n).
\eeq{fd}
But this description of the meaning of $1\{p(X)\}$ implicitly assumes that
the Herbrand universe of the program is finite.  If the program contains
function symbols then an infinite disjunction has to be used instead
of~(\ref{fd}).

There is nothing exotic or noncomputable about
ASP programs containing both aggregates and function symbols.  For instance,
the program
$$\ba l
p(f(a))\\
q \ar 1\{p(X)\}\\
\ea$$
has simple intuitive meaning, and its stable model $\{p(f(a)),q\}$
can be computed by Version 3 of the answer set solver {\sc clingo}.\footnote{\tt http://potassco.sourceforge.net.} More generally, stable models
of infinitary propositional formulas in the sense of \citeANP{tru12} 
\citeyear{tru12} can be used to define the semantics of aggregates in 
the input language of {\sc clingo} \cite{har13b}; this is our main
motivation for studying their properties.
\begin{remark}
Attempts to define the semantics of aggregates for other ASP languages 
encounter similar difficulties if the Herbrand universe is infinite.  For
instance, the definition of a ground instance of a rule in Section~2.2 of the
ASP Core document ({\tt https://www.mat.unical.it /aspcomp2013/files/ASP-CORE-2.0.pdf}, Version 2.02)
talks about replacing the expression $\{e_1;\ldots;e_n\}$ in a rule with
a set denoted by $\hbox{inst}(\{e_1;\ldots;e_n\})$.  But that set can be infinite
and then it cannot be included in a rule.
\end{remark}

Our goal here is to develop methods for proving that pairs~$F$,~$G$ of
infinitary formulas have the same stable models.  From the
results of \citeANP{pea97} \citeyear{pea97} and \citeANP{fer05} \citeyear{fer05} 
we know that in the case of
grounded logic programs in the sense of \citeANP{gel88} \citeyear{gel88} and, 
more generally, sets of 
finite propositional formulas, it is sufficient to check that the equivalence
$F\lrar G$ is provable intuitionistically.  Some extensions of intuitionistic
propositional logic, including the logic of here-and-there, can be used as
well.  In this note, we extend these results to deductive systems of
infinitary propositional logic.

This goal is closely related to the idea of strong equivalence~
(Lifschitz, Pearce, Valverde, 2001)\nocite{lif01}.
The provability of $F\lrar G$ in the  deductive systems of
infinitary logic described below guarantees not only that~$F$ and~$G$
have the same stable models, but also that for any set $\mathcal H$ of
infinitary formulas, $\mathcal{H}  \cup \{F\}$ and~$\mathcal{H} \cup \{G\}$
have the same stable models.

We review the stable model semantics of infinitary propositional formulas in 
Section~\ref{sec:sm}. Then we define a basic infinitary system of natural deduction,
similar to propositional intuitionistic logic (Section~\ref{sec:basic}), and 
study its properties (Section~\ref{sec:properties}). The main theorem is stated
and proved in Section~\ref{sec:main}, and applied to examples involving aggregates 
in Section~\ref{sec:agg}. A useful extension of the basic system is discussed
in Section~\ref{sec:ext}.

A preliminary report on this work was presented at the 2013 
International 
Conference on Logic Programming and Nonmonotonic Reasoning 
(Harrison, Lifschitz, Truszczynski, 2013) \nocite{har13}.
 
\section{Stable Models of Infinitary Propositional Formulas} \label{sec:sm}
The definitions of infinitary formulas and their stable models given below are 
equivalent to the definitions proposed by \citeANP{tru12} \citeyear{tru12}.

Let $\sigma$ be a propositional signature,
that is, a set of propositional atoms.  The sets
$\mathcal{F}^\sigma_0$, $\mathcal{F}^\sigma_1$, $\ldots$ are defined as follows:
\begin{itemize}
\item $\mathcal{F}^\sigma_0=\sigma$,
\item $\mathcal{F}^\sigma_{i+1}$ is obtained from $\mathcal{F}^\sigma_{i}$ by 
adding expressions $\mathcal{H}^\lor$ and $\mathcal{H}^\land$ for all subsets 
$\mathcal{H}$ of $\mathcal{F}^\sigma_i$, and expressions $F\rar G$ for all 
$F,G\in\mathcal{F}^\sigma_i$.
\end{itemize}
The elements of $\bigcup^{\infty}_{i=0}\mathcal{F}^\sigma_i$ are called {\sl 
(infinitary) formulas} over $\sigma$.
\begin{remark}
This definition differs from the syntax introduced in early work on
infinitary propositional formulas \cite{sco58,kar64} in several ways.
It treats the collection $\mathcal{H}$ of conjunctive or disjunctive terms as
a set, rather than a family indexed by ordinals.  Thus there is no
order among conjunctive or disjunctive terms in this framework, and there can
be no repetitions among them.  More importantly, there is no restriction here
on the cardinality
of the set of conjunctive or disjunctive terms.  On the other hand, in the
hierarchy $\mathcal{F}^\sigma_i$ of sets of formulas, $i$ is a natural number;
transfinite levels are not allowed.
\end{remark} 

A set $\mathcal{H}$ of formulas is {\sl bounded} if it is 
contained in one of the sets $\mathcal{F}^\sigma_i$. For a bounded
set $\mathcal{H}$ of formulas, $\mathcal{H}^\land$ and $\mathcal{H}^\lor$ are
infinitary formulas.  

The symbol $\bot$ will be understood as an abbreviation for
$\emptyset^{\lor}$; $\neg F$ stands for $F\rar\bot$, and $F\lrar G$ 
stands for $(F\rar G)\land(G\rar F)$.

We will write $\{F,G\}^\land$ as $F\land G$, and
$\{F,G\}^\lor$ as $F\lor G$.  This convention allows us to view
finite propositional formulas over~$\sigma$
as a special case of infinitary formulas. For any bounded family 
$\{F_\alpha\}_{\alpha \in A}$ of formulas, we denote the formula 
$\{F_\alpha: {\alpha \in A}\}^\land$  by $\bigwedge_{\alpha \in A} F_\alpha$, 
and similarly for disjunctions. 

Subsets of a signature~$\sigma$ will be also called its {\sl interpretations}.
The satisfaction relation between an interpretation~$I$ and a formula~$F$ is 
defined as follows:
\begin{itemize}
\item For every $p\in \sigma$, $I\models p$ if $p\in I$.
\item $I\models\mathcal{H}^\land$ if for every formula $F\in\mathcal{H}$, $I\models F$.
\item $I\models\mathcal{H}^\lor$ if there is a formula $F\in\mathcal{H}$ such that $I\models F$.
\item $I\models F\rar G$ if $I\not\models F$ or $I\models G$.
\end{itemize}

We say that $I$ {\sl satisfies} a set $\mathcal{H}$ of formulas if $I$ 
satisfies all elements of $\mathcal{H}$. 
Two sets of  formulas are {\sl equivalent} to each other if they are
satisfied by the same interpretations.
A formula~$F$ is {\sl tautological} if it is satisfied by all
interpretations.

The {\sl reduct} $F^I$ of a formula~$F$ with respect to an
interpretation~$I$ is defined as follows:
\begin{itemize}
\item For $p\in \sigma$, $p^I=\bot$ if $I\not\models p$; otherwise $p^I=p$.
\item $(\mathcal{H}^\land)^I=\{G^I\ |\ G\in\mathcal{H}\}^\land$.
\item $(\mathcal{H}^\lor)^I=\{G^I\ |\ G\in\mathcal{H}\}^\lor$.
\item $(G\rar H)^I=\bot$ if $I\not\models G\rar H$; otherwise $(G\rar H)^I=G^I\rar H^I$.
\end{itemize}
The {\sl reduct} $\mathcal{H}^I$ of a set $\mathcal{H}$ of formulas is the set 
consisting of the reducts of the elements of $\mathcal{H}$. 
An interpretation~$I$ is a {\sl stable model} of a set $\mathcal{H}$ of  
formulas if it is minimal w.r.t. set inclusion among the interpretations satisfying 
$\mathcal{H}^I$; a stable model of a formula $F$ is a stable model of singleton 
$\{F\}$.  This is a straightforward extension of the 
definition of a stable model due to \citeANP{fer05} \citeyear{fer05}
to infinitary formulas.

It is easy to see that~$I\models F^I$ iff $I\models F$.  It follows that
every stable model of~$\mathcal{H}$ satisfies~$\mathcal{H}$. 

\section{Basic Infinitary System of Natural Deduction} \label{sec:basic}

Inference rules of the deductive system described below are similar to the
standard natural deduction rules of propositional logic.\footnote{See, for
instance, \cite[Section~1.2.1]{lif08b}.} Its
derivable objects are {\sl (infinitary) sequents\/}---expressions
of the form $\G\seq F$, where~$F$ is an infinitary formula, and $\G$ is a
{\sl finite} set of infinitary formulas (``$F$ under assumptions $\G$'').  To
simplify notation, we will write $\G$ as a list.
We will identify a sequent of the form $\seq F$ with the formula~$F$.

There is one axiom schema $F\seq F$.  The inference rules are the introduction
and elimination rules for the propositional connectives
$$\begin{array}{ll}
\!(\land I)\;\r	{\G\seq H \quad\hbox{for all }H\in\mathcal H}
		{\G\seq \mathcal H^\land}
&\quad		
(\land E)\;\r	{\G\seq \mathcal H^\land}
		{\G\seq H}
	\quad(H\in\mathcal H)

\\ \\
\!(\lor I)\;\r	{\G\seq H}
		{\G\seq \mathcal H^\lor}
\quad(H\in\mathcal H)
&\quad				
(\lor E)\;\r{\G\seq \mathcal H^\lor \qquad \D,H \seq F
    \quad\hbox{for all }H\in\mathcal H}
	    {\G,\D\seq F}
\\ \\
\!(\rar\!\! I)\;\r	{\G,F\seq G}
			{\G\seq F\rar G}
&\quad				
(\rar\!\! E)\;\r	{\G\seq F\quad \D\seq F \rar G}
			{\G,\D\seq G},
\end{array}$$
where $\mathcal{H}$ is a bounded set of formulas, 
and the weakening rule
$$\ba{l}
(W)\;\r {\G\seq F}{\G,\D\seq F}.\qquad\qquad\qquad
\ea$$
\begin{remark}
The usual conjunction introduction rule is 
$$\r{\G\seq F \quad \D\seq G}{\G,\D\seq F\land G};$$
the corresponding infinitary rule above is similar to the 
more restrictive version:
$$\r{\G\seq F \quad \G\seq G}{\G\seq F\land G}.$$ 
In the presence of the weakening rule $(W)$, the two versions are equivalent
to each other. The situation with disjunction elimination is similar. 
The usual contradiction rule 
$$\ba{l}
(C)\;\r {\G\seq \bot}{\G \seq F}\qquad\qquad\qquad
\ea$$
is a special case of ($\lor E$).  
We do not include the law of the excluded middle in the set
of axioms, so that this deductive system is similar to intuitionistic,
rather than classical, propositional logic. 
\end{remark}

The set of {\sl theorems of the basic system} is the smallest set of sequents
that includes the axioms of the system and is closed under the application
of its inference rules.
We say that  formulas~$F$ and~$G$ are {\sl equivalent in the basic
system} if $F\lrar G$ is a theorem of the basic system.  The reason why we are
interested in this relation is that formulas equivalent in the basic system
have the same stable models, as discussed in Section~\ref{sec:main} below.

\begin{example}
Let $\{F_i\}_{i \in \mathbb{N}}$ be a bounded family of formulas. 
We will check that the formula 
\beq
F_0\land\bigwedge_{i\geq 0}(F_i\rar F_{i+1})
\eeq{ex1}
is equivalent in the basic system to the formula
$\bigwedge_{i\geq 0}F_i$.  The sequent
$$F_0\land\bigwedge_{i\geq 0}(F_i\rar F_{i+1})
\seq\; F_0\land\bigwedge_{i\geq 0}(F_i\rar F_{i+1})$$
belongs to the set of theorems of the basic system.  Consequently so do
the sequents
$$F_0\land\bigwedge_{i\geq 0}(F_i\rar F_{i+1})
\;\seq\; F_0$$
and
$$F_0\land\bigwedge_{i\geq 0}(F_i\rar F_{i+1})
\seq\; F_j\rar F_{j+1}$$
for all $j\geq 0$.  Consequently the sequents
$$F_0\land\bigwedge_{i\geq 0}(F_i\rar F_{i+1})
\seq\; F_j$$
for all $j\geq 0$ belong to the set of theorems as well (by induction on~$j$).
Consequently so does the sequent
$$
F_0\land\bigwedge_{i\geq 0}(F_i\rar F_{i+1})\;\seq\;\bigwedge_{i\geq 0}F_i.
$$
A similar argument (except that induction is not needed) shows that the
sequent
$$
\bigwedge_{i\geq 0}F_i\;\seq\; F_0\land\bigwedge_{i\geq 0}(F_i\rar F_{i+1})
$$
is a theorem of the basic system also.  Consequently so is the sequent
$$
\seq\; F_0\land\bigwedge_{i\geq 0}(F_i\rar F_{i+1})\ \lrar\ 
\bigwedge_{i\geq 0}F_i.
$$
\end{example}

This argument could be expressed more concisely, without explicit
references to the set of theorems of the basic system, as follows.
Assume~(\ref{ex1}).  Then $F_0$ and, for every $i\geq 0$, $F_i\rar F_{i+1}$.
Then, by induction, $F_i$ for every~$i$.  And so forth.  This style of
presentation is used in the next example.

\begin{example}\label{e2}
Let $\{F_\alpha\}_{\alpha \in A}$ be a bounded family of formulas, and let 
$G$ be a formula.  Let us show that 
\beq
\left ( \bigvee_{\alpha \in A}F_\alpha\right ) \rar G
\eeq{ex2l}
is equivalent in the basic system to the formula
\beq
\bigwedge_{\alpha \in A}(F_\alpha\rar G).
\eeq{ex2r}
Left-to-right: assume~(\ref{ex2l}) and~$F_\alpha$.  Then $\bigvee_{\alpha \in A}F_\alpha$,
and consequently~$G$.  Thus we established~$F_\alpha\rar G$ under
assumption~(\ref{ex2l}) alone for every~$\alpha$, and consequently
established~(\ref{ex2r}) under this assumption as
well.  Right-to-left: assume~(\ref{ex2r})
and $\bigvee_{\alpha \in A}F_\alpha$, and consider the cases corresponding to the
disjunctive terms of this disjunction.  Assume~$F_\alpha$.  From~(\ref{ex2r}),
$F_\alpha\rar G$, and consequently~$G$.  Thus we established~$G$ in each case,
so that~(\ref{ex2l}) follows from~(\ref{ex2r}) alone.
\end{example}

\medskip

It is easy to see that the infinitary counterparts of the intuitionistically
provable De Morgan's laws 
\beq
\bigvee_{F \in \mathcal H}\neg F \rar \neg\bigwedge_{F \in \mathcal H} F 
\eeq{dem1}
and 
\beq
\bigwedge_{F \in \mathcal H}\neg F \lrar \neg\bigvee_{F \in \mathcal H} F,
\eeq{dem2}
where $\mathcal{H}$ is a bounded set of formulas, 
are theorems of the basic system. So are the infinitary distributivity 
laws
\beq
\left ( \bigvee_{\{F_i\}_{i \in I}}\ \  \bigwedge_{i \in I} F_i \right ) \rar
\left ( \bigwedge_{i \in I}\ \ \bigvee_{F \in \mathcal{H}_i} F \right )  
\eeq{d1}
and
\beq
\left ( \bigvee_{i \in I}\ \ \bigwedge_{F \in \mathcal{H}_i} F \right ) \rar 
\left ( \bigwedge_{\{F_i\}_{i \in I}}\ \  \bigvee_{i \in I} F_i \right ) 
\eeq{d2}
for every non-empty family $\{\mathcal{H}_i\}_{i \in I}$ of sets of formulas 
such that its union is bounded. The disjunction in the antecedent of 
(\ref{d1}) and the conjunction
in the consequent of (\ref{d2}) extend over all elements 
$\{F_i\}_{i \in I}$ of the Cartesian product of the family 
$\{\mathcal{H}_i\}_{i \in I}$. In Section \ref{sec:ext} we discuss an extension
of the basic system in which we postulate the converses of implications
(\ref{dem1}), (\ref{d1}), and (\ref{d2}).  

\section{Properties of the Basic System}\label{sec:properties}

The following property of the basic system is easy to verify.

\begin{proposition}\label{prop:int}
If a sequent consisting of finite formulas is intuitionistically provable then
it is a theorem of the basic system.
\end{proposition}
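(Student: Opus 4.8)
The plan is to argue by structural induction on an intuitionistic derivation of the given sequent. First I would fix the specific formulation of natural deduction for intuitionistic propositional logic referred to in the footnote, and observe three things about it: every formula occurring in it is finite, every set of assumptions is finite, and, under the conventions of Section~\ref{sec:sm}, each finite conjunction $F\land G$ and disjunction $F\lor G$ is literally the infinitary formula $\{F,G\}^\land$ and $\{F,G\}^\lor$. In particular $\{F,G\}$ is a bounded set, so the infinitary rules $(\land I)$, $(\land E)$, $(\lor I)$, $(\lor E)$ of the basic system are applicable to it.

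For the base case, the axiom $F\seq F$ of the intuitionistic system is also an axiom of the basic system. For the induction step I would go through the inference rules of the intuitionistic system one by one and check that, assuming the premises are already theorems of the basic system, so is the conclusion. The implication rules $(\rar I)$ and $(\rar E)$, and, when stated in the form with a single assumption set on all premises, the conjunction and disjunction rules, are instances of the corresponding rules of the basic system. The ex falso rule, inferring $\G\seq F$ from $\G\seq\bot$, is the special case of $(\lor E)$ with $\mathcal H=\emptyset$, as noted in the Remark following the rules. The only rules requiring an extra step are the customary forms of conjunction introduction and disjunction elimination, whose premises may carry different assumption sets: from $\G\seq F$ and $\D\seq G$ one first obtains $\G,\D\seq F$ and $\G,\D\seq G$ by the weakening rule $(W)$ and then applies $(\land I)$ to conclude $\G,\D\seq F\land G$, and dually for $(\lor E)$.

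I expect the only obstacle to be bookkeeping: pinning down the chosen presentation of intuitionistic natural deduction and matching each of its rules against the basic system, together with the observation---already made in the Remark---that the gap between the ``additive'' and ``multiplicative'' versions of $(\land I)$ and $(\lor E)$ is bridged by $(W)$. Once this rule-by-rule correspondence is in place, the induction goes through directly and the proposition follows.
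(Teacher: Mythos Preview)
Your proposal is correct and is precisely the kind of routine verification the paper has in mind: the paper does not actually spell out a proof of this proposition but simply remarks that it ``is easy to verify,'' leaving the rule-by-rule induction you describe as an exercise. Your handling of the additive/multiplicative mismatch via $(W)$ and of ex falso as the $\mathcal H=\emptyset$ case of $(\lor E)$ matches exactly the observations already made in the Remark following the rules.
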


Recall that we define the set of theorems of the basic system to be the smallest
set of formulas that includes the axioms and is closed under the inference
rules. 
When we want to prove that every theorem of the basic system has a certain
property~$P$, it is clearly sufficient to check that every axiom has the
property~$P$, and that the set of sequents that have the property~$P$ is closed
under the application of the inference rules.
In this way we can establish, in particular, the following fact:

\begin{proposition}\label{prop:taut}
For any theorem $\Gamma\seq F$ of the basic system, the
formula $\G^\land\rar F$ is tautological.

\begin{remark}
The assertion of Proposition \ref{prop:taut} will remain
true even if we extend the set of axioms to include the law of the excluded
middle 
\beq
F \lor \neg F.
\eeq{lem} 
The converse is not true, however, even in the
presence of this axiom schema. This fact can be established by standard
methods used to prove incompleteness in infinitary logic, which utilize the
Downward L{\"o}wenheim-Skolem Theorem and the Mostowski 
Collapsing Lemma.\footnote{John Schlipf, personal communication.} We can make the 
system complete by postulating the following infinitary version of the law
of the excluded middle:
\beq
\bigvee_{J \subseteq I}\left (\bigwedge_{j \in 
J} F_j \land \bigwedge_{j \in I \setminus J}
\neg F_j \right ),
\eeq{ilem}
for any non-empty bounded family $\{F_i\}_{i \in I}$ of formulas.\footnote{ 
The proof of this fact is similar to the proof of completeness
of classical propositional logic due to \citeANP{kal35} \citeyear{kal35}. 
For any interpretation $I$, let $L_I$ denote the conjunction of the 
corresponding set of literals.  It is easy to check by induction
that for any formula $F$, 
$L_I \rar F$ is a theorem of the basic system if $I$ satisfies $F$, and 
$L_I \rar \neg F$ is a theorem of the basic system otherwise. 
The completeness of the basic system with (\ref{ilem}) added as an axiom 
schema easily follows.}
\end{remark}\label{ft}

\end{proposition}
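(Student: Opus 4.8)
The plan is to argue exactly along the lines of the paragraph preceding the proposition: I would show that the set of sequents $\G\seq F$ for which $\G^\land\rar F$ is tautological contains every axiom and is closed under every inference rule, so that it contains all theorems of the basic system. Call a sequent \emph{good} if it has this property. Since $\G$ is by definition a finite, hence bounded, set of formulas, $\G^\land$ really is an infinitary formula; and unwinding the satisfaction clauses for $\rar$ and for $\mathcal H^\land$, a sequent $\G\seq F$ is good precisely when every interpretation that satisfies all formulas in $\G$ also satisfies $F$. This reformulation is what I would actually use.

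The axiom schema $F\seq F$ is good because $I\models F$ implies $I\models F$. The weakening rule $(W)$ preserves goodness since enlarging $\G$ only makes the hypothesis "satisfies all of $\G$" harder to meet. The rules $(\land I)$, $(\land E)$, $(\lor I)$, and $(\rar E)$ each reduce to a single application of the corresponding clause in the definition of $\models$ — for instance, for $(\rar E)$, if $I$ satisfies all formulas in $\G\cup\D$ then the induction hypotheses give $I\models F$ and $I\models F\rar G$, hence $I\models G$. The only two rules that need a sentence more are $(\rar I)$ and $(\lor E)$. For $(\rar I)$, given that $\G,F\seq G$ is good and $I$ satisfies all of $\G$, one splits on whether $I\models F$: if not, $I\models F\rar G$ immediately; if so, $I$ satisfies all of $\G\cup\{F\}$, so $I\models G$, and again $I\models F\rar G$. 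For $(\lor E)$, if $I$ satisfies all of $\G\cup\D$, then from goodness of $\G\seq\mathcal H^\lor$ we get $I\models H$ for some $H\in\mathcal H$, and then goodness of $\D,H\seq F$ for that particular $H$ gives $I\models F$. This exhausts the rules.

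I do not expect a genuine obstacle here; the argument is a routine structural induction, and the only points that call for any attention are the bookkeeping of the assumption sets in the two-premise rules (and in the "for all $H\in\mathcal H$" premise of $(\lor E)$) and the case split on the antecedent in $(\rar I)$. I would also observe, in support of the subsequent remark, that the same induction goes through verbatim if the law of the excluded middle $F\lor\neg F$ is added as an axiom: the sequent $\seq F\lor\neg F$ is good because $F\lor\neg F$ is a tautology in the classical sense of satisfaction used in Section~\ref{sec:sm}.
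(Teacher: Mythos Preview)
Your proposal is correct and follows exactly the approach the paper indicates: the paragraph immediately preceding Proposition~\ref{prop:taut} says that to prove a property~$P$ holds of all theorems one checks the axioms have~$P$ and the inference rules preserve~$P$, and the paper offers no further detail. Your fleshed-out verification of each rule, including the case splits for $(\rar I)$ and $(\lor E)$ and the observation about~(\ref{lem}), is precisely what the paper leaves implicit.
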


Let $\sigma$ and $\sigma'$ be disjoint signatures.  In this section, a
{\sl substitution} is a bounded family of formulas over $\sigma$ with 
index set $\sigma'$. 
For any substitution~$\phi$
and any formula~$F$ over the signature $\sigma \cup \sigma'$, $\phi F$
stands for the formula over~$\sigma$ formed as follows:
\begin{itemize}

   \item If $F\in \sigma$ then $\phi F = F$. 
   \item If $F\in \sigma'$ then $\phi F = \phi_F$.  
   \item If $F$ is $\mathcal{H}^\land$ then $\phi F =
         \left \{\phi G\ |\ G \in \mathcal{H}\right \}^\land$. 
   \item If $F$ is $\mathcal{H}^\lor$ then $\phi F =
         \left \{\phi G\ |\ G \in \mathcal{H}\right \}^\lor$. 
   \item If $F$ is $G \rightarrow H$ then $\phi F =
         \phi G \rightarrow \phi H$. 
\end{itemize} 
Formulas of the form $\phi F$ will be called {\sl instances} of~$F$.

\begin{proposition}\label{prop:inst}
If~$F$ is a theorem of the basic system then every instance of~$F$ is a theorem
of the basic system also.
\end{proposition}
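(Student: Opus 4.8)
The plan is to prove, by induction on derivations, a statement slightly more general than the one stated, obtained by extending the notion of instance from formulas to finite sets of formulas and to sequents. For a substitution $\phi$ and a finite set $\Gamma$ of formulas over $\sigma\cup\sigma'$, let $\phi\Gamma=\{\phi H : H\in\Gamma\}$, which is again a finite set of formulas over $\sigma$; and for a sequent $\Gamma\seq F$, let $\phi(\Gamma\seq F)$ be the sequent $\phi\Gamma\seq\phi F$. (When $\Gamma$ is empty this agrees, under the identification of $\seq F$ with $F$, with the instance $\phi F$ of the formula $F$.) The generalized claim is: for every theorem $\Gamma\seq F$ of the basic system and every substitution $\phi$, the sequent $\phi\Gamma\seq\phi F$ is a theorem of the basic system. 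Proposition~\ref{prop:inst} is the special case $\Gamma=\emptyset$.

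Before the induction I would record two routine facts. First, substitution preserves boundedness: if $\mathcal H\subseteq\mathcal F^{\sigma\cup\sigma'}_i$ and all the formulas $\phi_F$ belong to $\mathcal F^\sigma_j$, then $\phi\mathcal H=\{\phi G : G\in\mathcal H\}$ is contained in $\mathcal F^\sigma_{i+j}$, by a straightforward induction on $i$. In particular $\phi F$ is a well-defined formula for every formula $F$ over $\sigma\cup\sigma'$. Second, by the very definition of $\phi F$, substitution commutes with the formula-forming operations: $\phi(\mathcal H^\land)=(\phi\mathcal H)^\land$, $\phi(\mathcal H^\lor)=(\phi\mathcal H)^\lor$, and $\phi(G\rar H)=\phi G\rar\phi H$.

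Then I would carry out the induction along the lines indicated just before Proposition~\ref{prop:taut}: check that every axiom, after applying $\phi$, remains a theorem, and that the class of sequents $\Gamma\seq F$ for which $\phi\Gamma\seq\phi F$ is a theorem for every $\phi$ is closed under each inference rule. The axiom $F\seq F$ maps to the axiom $\phi F\seq\phi F$. For each rule one applies $\phi$ to the premises, rewrites $\phi(\mathcal H^\land)$, $\phi(\mathcal H^\lor)$, $\phi(G\rar H)$ using the commutation identities above, observes that a premise quantified "for all $H\in\mathcal H$" becomes one quantified "for all $H'\in\phi\mathcal H$" (each such $H'$ being $\phi H$ for some $H\in\mathcal H$, hence supplied by the induction hypothesis), and applies the same rule to the images. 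For $(\lor E)$, say: from $\phi\Gamma\seq(\phi\mathcal H)^\lor$ and $\phi\Delta,H'\seq\phi F$ for all $H'\in\phi\mathcal H$, rule $(\lor E)$ gives $\phi\Gamma,\phi\Delta\seq\phi F$, which is $\phi(\Gamma,\Delta\seq F)$ since $\phi(\Gamma\cup\Delta)=\phi\Gamma\cup\phi\Delta$. The cases of $(\land I)$, $(\land E)$, $(\lor I)$, $(\rar\! I)$, $(\rar\! E)$, and $(W)$ are analogous.

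The only point needing attention — the nearest thing to an obstacle — is the boundedness bookkeeping just mentioned: one must know that $\phi\mathcal H$ is bounded whenever $\mathcal H$ is, so that $(\phi\mathcal H)^\land$ and $(\phi\mathcal H)^\lor$ are legitimate infinitary formulas and the rules $(\land I)$, $(\land E)$, $(\lor I)$, $(\lor E)$ can be applied to them; the stipulation that a substitution is a \emph{bounded} family is exactly what makes this go through. A further minor subtlety, not really an obstacle, is that $\phi$ may identify distinct members of $\mathcal H$, so $\phi\mathcal H$ can be a proper quotient of $\mathcal H$; this causes no trouble, since each element of $\phi\mathcal H$ still has at least one preimage in $\mathcal H$, which is all the rules require.
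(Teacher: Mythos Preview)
Your proposal is correct and follows essentially the same approach as the paper: extend $\phi$ to sequents in the natural way and verify that the property ``$\phi S$ is a theorem of the basic system'' holds for every axiom and is preserved by every inference rule. The paper states this in two sentences; your write-up spells out the boundedness bookkeeping and the handling of the per-$H\in\mathcal H$ premises, which are exactly the details one would fill in.
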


\begin{proof}
The notation $\phi F$ extends to sequents in a natural way.
The property ``$\phi S$ is a theorem of the basic system''
holds for every axiom~$S$ of the basic system, and it is preserved by all
inference rules.
\end{proof}

We will refer to Proposition~\ref{prop:inst} as the {\sl substitution property}
of the basic system.

\begin{example}\label{e3}
We will show that for any 
formulas~$F$,~$G$, the formula $\neg(F\lor G)$ is equivalent
to $\neg F \land \neg G$ in the basic system.  Note first that the formula
\beq
\neg(p\lor q) \lrar \neg p \land \neg q
\eeq{int}
is intuitionistically provable.  By Proposition~\ref{prop:int}, it follows that it
is a theorem of the basic system.  The equivalence
$$\neg(F\lor G) \lrar \neg F \land \neg G$$
is an instance of~(\ref{int}): take $\phi_p=F$, $\phi_q=G$.  By the
substitution property, it follows that it is a theorem of the basic
system as well.
\end{example}

\begin{proposition}\label{prop:repl}
For any substitutions~$\phi$, $\psi$ with the same index set, the implication 
$$\bigwedge_{p} (\phi_p \leftrightarrow \psi_p) 
\rightarrow (\phi F \leftrightarrow \psi F)$$
(where $p$ ranges over the indices) is a theorem of the basic system. 
\end{proposition}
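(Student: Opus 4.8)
The plan is to argue by induction on the structure of the formula $F$ over the signature $\sigma \cup \sigma'$, following the five clauses in the definition of $\phi F$. Throughout, write $\Phi$ for the finite conjunction $\bigwedge_p (\phi_p \lrar \psi_p)$ that serves as the antecedent; the goal is to show $\Phi \rar (\phi F \lrar \psi F)$ is a theorem of the basic system for every $F$. Since $\Phi$ is a conjunction over the (bounded) index set $\sigma'$, the rule $(\land E)$ lets us extract $\phi_p \lrar \psi_p$ for any individual $p$ whenever we have $\Phi$ available as an assumption, and hence, via $(\rar\!\! E)$, to pass back and forth between $\phi_p$ and $\psi_p$. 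I would present the argument in the informal style of Example~\ref{e2}: ``assume $\Phi$; it suffices to derive $\phi F \rar \psi F$ and $\psi F \rar \phi F$,'' and then treat the cases.

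The base cases are immediate. If $F \in \sigma$, then $\phi F = \psi F = F$, so $\phi F \lrar \psi F$ is a theorem (from the axiom $F \seq F$) regardless of $\Phi$. If $F \in \sigma'$, say $F = p$, then $\phi F = \phi_p$ and $\psi F = \psi_p$, and $\Phi \rar (\phi_p \lrar \psi_p)$ is a theorem by $(\land E)$ and $(\rar\!\! I)$. For the implication case $F = G \rar H$, the induction hypotheses give $\Phi \rar (\phi G \lrar \psi G)$ and $\Phi \rar (\phi H \lrar \psi H)$; assuming $\Phi$ and $\phi G \rar \phi H$ and $\psi G$, one converts $\psi G$ to $\phi G$, applies the implication to get $\phi H$, and converts back to $\psi H$, which yields $\phi(G \rar H) \rar \psi(G \rar H)$; the other direction is symmetric. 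For the conjunctive case $F = \mathcal{H}^\land$, note $\phi F = \{\phi G : G \in \mathcal H\}^\land$ and $\psi F = \{\psi G : G \in \mathcal H\}^\land$; assuming $\Phi$ and $\phi F$, use $(\land E)$ to get each $\phi G$, use the induction hypothesis (with $\Phi$) to get each $\psi G$, and then $(\land I)$ over all $G \in \mathcal H$ to obtain $\psi F$. The disjunctive case $F = \mathcal{H}^\lor$ is dual, using $(\lor E)$ on $\phi F$ together with the induction hypotheses to derive $\psi F$ in every case, and then $(\lor I)$.

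The one point that needs care — and the only place the argument is not completely routine — is the justification that this structural induction is legitimate, i.e. that it actually covers all formulas over $\sigma \cup \sigma'$. Formulas are built up through the hierarchy $\mathcal F^{\sigma \cup \sigma'}_0, \mathcal F^{\sigma \cup \sigma'}_1, \ldots$, so the induction is really an induction on the least $i$ with $F \in \mathcal F^{\sigma \cup \sigma'}_i$; at each successor level a formula is either $\mathcal H^\land$ or $\mathcal H^\lor$ for a subset $\mathcal H$ of the previous level, or $G \rar H$ with $G, H$ in the previous level, and in all cases the immediate subformulas sit at a strictly lower level, so the induction hypothesis applies to them. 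One should also observe that the sets $\{\phi G : G \in \mathcal H\}$ and $\{\psi G : G \in \mathcal H\}$ are bounded whenever $\mathcal H$ is (because $\phi$ and $\psi$ are bounded substitutions and $\phi$, $\psi$ do not increase nesting depth beyond a fixed amount determined by the depth of $\mathcal H$), so the formulas $\phi F$ and $\psi F$ are genuine infinitary formulas and the applications of $(\land I)$, $(\land E)$, $(\lor I)$, $(\lor E)$ to them are legal. With these bookkeeping observations in place, the five cases above complete the proof.
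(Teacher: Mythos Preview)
Your proof is correct and follows essentially the same approach as the paper: induction on the least level~$j$ with $F\in\mathcal F^{\sigma\cup\sigma'}_j$, together with a case analysis on the syntactic form of~$F$; the paper spells out only the $\mathcal H^\lor$ case, while you treat all five. One small slip: you call $\Phi=\bigwedge_p(\phi_p\lrar\psi_p)$ a \emph{finite} conjunction, but the index set~$\sigma'$ need not be finite---it is merely bounded (because the substitutions are bounded families), which is all that is needed for $(\land E)$ to apply.
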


\begin{proof}
The proof is by induction on~$j$ such that $F\in\mathcal{F}_j^{\sigma\cup\sigma'}$, and
it considers several cases, depending on the syntactic form of~$F$.  Assume, for
instance, that $F$ is $\mathcal{H}^\lor$.
Then
$$\phi F = \left \{\phi G\ |\ G \in \mathcal{H}\right \}^\lor,\quad
\psi F = \left \{\psi G\ |\ G \in \mathcal{H}\right \}^\lor.$$ By the induction 
hypothesis, for each $G$ in $\mathcal{H}$, the implication 
    \beq
          \bigwedge_{p} (\phi_p \leftrightarrow \psi_p) 
          \rightarrow (\phi G \leftrightarrow \psi G)
    \eeq{ih}
          is a theorem of the basic system. We need to show that 
    \begin{equation}
          \bigwedge_{p} (\phi_p \leftrightarrow \psi_p) 
          \rightarrow \left(\{\phi G\ |\ G\in\mathcal{H}\}^\lor 
          \leftrightarrow \{\psi G\ |\ G\in\mathcal{H}\}^\lor\right)
    \end{equation} 
          is a theorem of the basic system also.  Assume
\beq
\bigwedge_{p} (\phi_p \leftrightarrow \psi_p)
\eeq{e1}
and $\left \{\phi G\ |\ G \in \mathcal{H}\right \}^\lor$,
and consider the cases corresponding to the terms of this disjunction.
Assume $\phi G$. Then, by~(\ref{ih}) and~(\ref{e1}), $\psi G$.  We can conclude
$\left \{\psi G\ |\ G \in \mathcal{H}\right \}^\lor$, that is,~$\psi F$.
So we established the implication $\phi F \rightarrow \psi F$.  The implication
in the other
direction is proved in a similar way.
\end{proof}

\begin{corollary}
If for every index~$p$, $\phi_p$ is equivalent to $\psi_p$ in the basic
system then $\phi F$ is equivalent to $\psi F$ in the basic system.
\end{corollary}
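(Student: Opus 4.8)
The plan is to obtain the Corollary simply by detaching the antecedent of the implication supplied by Proposition~\ref{prop:repl}. Recall that saying $\phi_p$ is equivalent to $\psi_p$ in the basic system means precisely that the sequent $\seq \phi_p \lrar \psi_p$ is a theorem of the basic system; likewise the conclusion we want is that $\seq \phi F \lrar \psi F$ is a theorem.

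First I would check that the conjunction $\bigwedge_p(\phi_p \lrar \psi_p)$ is a legitimate infinitary formula, i.e., that the set $\{\phi_p \lrar \psi_p : p \in \sigma'\}$ is bounded. Since $\phi$ and $\psi$ are substitutions, the families $\{\phi_p\}_p$ and $\{\psi_p\}_p$ are bounded, say both contained in $\mathcal{F}^\sigma_i$; then every formula $\phi_p \lrar \psi_p$ lies in $\mathcal{F}^\sigma_{i+2}$, so the set is bounded as required. This makes the $(\land I)$ step below applicable.

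Next I would use the hypothesis, which gives $\seq \phi_p \lrar \psi_p$ for every index $p$. Applying $(\land I)$ with $\G=\emptyset$ and $\mathcal{H}=\{\phi_p \lrar \psi_p : p\}$, we conclude that $\seq \bigwedge_p(\phi_p \lrar \psi_p)$ is a theorem of the basic system. On the other hand, Proposition~\ref{prop:repl} tells us that $\bigwedge_p(\phi_p \lrar \psi_p) \rar (\phi F \lrar \psi F)$ is a theorem. Applying $(\rar E)$, with empty assumptions on both premises, then yields $\seq \phi F \lrar \psi F$, which is exactly the assertion that $\phi F$ is equivalent to $\psi F$ in the basic system.

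There is essentially no hard step here: the Corollary is little more than a modus-ponens repackaging of Proposition~\ref{prop:repl}. The only point deserving a moment's attention is the boundedness observation above, which is what licenses collecting the premises $\seq \phi_p \lrar \psi_p$ into the single theorem $\seq \bigwedge_p(\phi_p \lrar \psi_p)$ via $(\land I)$; everything else is a direct appeal to the inference rules of the basic system.
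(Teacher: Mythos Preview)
Your proposal is correct and matches the paper's approach: the paper states the corollary immediately after Proposition~\ref{prop:repl} without giving a separate proof, treating it as an immediate consequence, which is exactly the modus-ponens argument you spell out. Your explicit boundedness check is a nice touch that the paper leaves implicit.
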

We will refer to this corollary as the replacement property of
the basic system.

\begin{example}\label{e4}
The formula
\beq 
\bigwedge_{k \geq 1} (p_k \rar \neg p_k) \rar p_0
\eeq{fpsi} 
is equivalent to
\beq
\bigwedge_{k \geq 1} \neg p_k \,\rar\, p_0
\eeq{fphi} 
in the basic system, because~(\ref{fphi}) can be obtained from~(\ref{fpsi})
by replacing $p_k \rar \neg p_k$ with the intuitionistically equivalent
$\neg p_k$.  More formally, let $q_k\ (k \geq 1)$ be the indices
and let $F$ be $\bigwedge_{k \geq 1} q_k \,\rar\, p_0$.  For the substitutions
$$\phi_{q_k} = p_k\rar \neg p_k,\quad \psi_{q_k} = \neg p_k,$$
$\phi F$ is~(\ref{fpsi}), and $\psi F$ is~(\ref{fphi}).
By the replacement property,~(\ref{fpsi}) is equivalent 
to~(\ref{fphi}).  
\end{example}

\section{Relation of the Basic System to Stable Models}\label{sec:main}

{\sl Main Theorem}\\
\noindent For any set $\mathcal{H}$ of formulas, 
\begin{itemize}
  \item[(a)] if a formula $F$ is a theorem of the basic system then $\mathcal{H} 
  \cup \{F\}$ has the same stable models as $\mathcal{H}$;
  \item[(b)] if $F$ is equivalent to~$G$ in the basic system then~$\mathcal{H} 
  \cup \{F\}$ and~$\mathcal{H} \cup \{G\}$ have the same stable models.
\end{itemize}

\begin{lemma}  \label{lemma1}
For any formula $F$ and interpretation~$I$, if~$I$ does not satisfy~$F$ then
$F^I~\seq~\bot$ is a theorem of the basic system.
\end{lemma}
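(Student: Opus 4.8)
The plan is to argue by induction on the natural number~$j$ such that $F \in \mathcal{F}_j^\sigma$, following the same case analysis that the definition of the reduct uses. The base case is when~$F$ is an atom~$p$: if $I \not\models p$ then by definition $p^I = \bot$, so the claim $p^I \seq \bot$ is just the axiom $\bot \seq \bot$.

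For the inductive step I would split on the syntactic form of~$F$. If~$F$ is $\mathcal{H}^\land$ and $I \not\models \mathcal{H}^\land$, then there is some $G \in \mathcal{H}$ with $I \not\models G$; by the induction hypothesis $G^I \seq \bot$ is a theorem, and since $(\mathcal{H}^\land)^I = \{H^I \mid H \in \mathcal{H}\}^\land$, the rule $(\land E)$ gives $(\mathcal{H}^\land)^I \seq G^I$, and composing (via the cut/transitivity that follows from $(\rar E)$ plus weakening, or simply by applying $(\land E)$ and then the derivation of $\bot$) yields $(\mathcal{H}^\land)^I \seq \bot$. If~$F$ is $\mathcal{H}^\lor$ and $I \not\models \mathcal{H}^\lor$, then $I \not\models G$ for \emph{every} $G \in \mathcal{H}$, so by the induction hypothesis $G^I \seq \bot$ for every $G$; since $(\mathcal{H}^\lor)^I = \{G^I \mid G \in \mathcal{H}\}^\lor$, the rule $(\lor E)$ (with $F$ taken to be $\bot$ and each minor premise being $G^I \seq \bot$) gives $(\mathcal{H}^\lor)^I \seq \bot$. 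Note this case also covers $\bot = \emptyset^\lor$ vacuously. If~$F$ is $G \rar H$ and $I \not\models G \rar H$, then by the definition of the reduct $(G\rar H)^I = \bot$, so again the claim is immediate from the axiom $\bot \seq \bot$.

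The main obstacle — really the only nonroutine point — is making sure the bookkeeping about which sequents are available lines up with the shapes the inference rules demand: in particular, that in the $\mathcal{H}^\lor$ case one is allowed to instantiate $(\lor E)$ with the conclusion~$\bot$ and empty side context, and that one can chain $(\land E)$ with the induction hypothesis, which amounts to a derived cut rule. I would either invoke a cut lemma if one is available earlier, or observe directly that from $\G \seq A$ and $A \seq B$ one obtains $\G \seq B$ by first using $(\rar I)$ to get $\seq A \rar B$, then $(\rar E)$. Everything else is a direct unfolding of the reduct definition against the matching elimination rule.
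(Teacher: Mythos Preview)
Your proposal is correct and follows exactly the approach the paper indicates: the paper merely states that the proof is ``straightforward by induction on $i$ such that $F \in \mathcal{F}_i^\sigma$'' without giving any details, and your case analysis on atoms, $\mathcal{H}^\land$, $\mathcal{H}^\lor$, and $G\rar H$ is precisely that induction carried out in full. The derived cut you spell out (from $A\seq B$ obtain $\seq A\rar B$ by $(\rar I)$, then combine with $\G\seq A$ via $(\rar E)$) is the right way to handle the conjunction case, and the remaining cases line up directly with the reduct definition as you note.
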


The proof is straightforward by induction on $i$ such that $F \in 
\mathcal{F}_i^\sigma$.

\medskip
By $\G^I$ we denote the set $\{G^I\ |\ G\in\G\}$; $(\G\seq F)^I$ stands for
$\G^I\seq F^I$.

\begin{lemma}  \label{lemma2}
For any sequent~$S$ and any interpretation~$I$, if~$S$ is a theorem of the basic
system then so is $S^I$.
\end{lemma}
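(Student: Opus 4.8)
The plan is to argue by induction on the derivation of the sequent $S$ in the basic system: we show that the class of sequents $S$ for which $S^I$ is a theorem of the basic system contains every axiom and is closed under every inference rule. Fix the interpretation~$I$ throughout. For the axiom schema $F\seq F$, the reduct is $F^I\seq F^I$, which is again an instance of the axiom schema, so the base case is immediate.

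Next I would check the inference rules. Two technical points govern the argument. First, reducts commute with the connectives in a convenient way: $(\mathcal{H}^\land)^I=\{G^I\mid G\in\mathcal{H}\}^\land$ and $(\mathcal{H}^\lor)^I=\{G^I\mid G\in\mathcal{H}\}^\lor$ hold by definition, so the introduction and elimination rules for $\land$ and $\lor$ transfer almost verbatim: applying $(\cdot)^I$ to all the premises of, say, $(\lor E)$ and to its conclusion yields exactly an instance of $(\lor E)$ again (using that $\D^I,H^I$ is $(\D,H)^I$), and similarly for $(\land I)$, $(\land E)$, $(\lor I)$, and $(W)$. Boundedness of $\mathcal{H}$ is preserved under reduct, so these remain legitimate infinitary rules.

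The implication rules are where the reduct does not commute syntactically, and this is the main obstacle. The definition sets $(G\rar H)^I=\bot$ when $I\not\models G\rar H$, and $(G\rar H)^I=G^I\rar H^I$ otherwise, so for $(\rar\!I)$ and $(\rar\!E)$ one must split on whether $I$ satisfies the relevant implication. Consider $(\rar\!I)$ with premise $\G,F\seq G$ and conclusion $\G\seq F\rar G$. If $I\models F\rar G$, then $(F\rar G)^I=F^I\rar G^I$, and from the induction hypothesis $\G^I,F^I\seq G^I$ we get $\G^I\seq F^I\rar G^I$ by $(\rar\!I)$. If $I\not\models F\rar G$, then $(F\rar G)^I=\bot$; here I would use that $I\models F$ but $I\not\models G$, invoke Lemma~\ref{lemma1} to get $G^I\seq\bot$, combine it with the induction hypothesis $\G^I,F^I\seq G^I$ to obtain $\G^I,F^I\seq\bot$, and then weaken away $F^I$ (justified because, $I$ satisfying $F$, we also have $\seq F^I$ available, or more directly because a derivation of $\bot$ from $\G^I,F^I$ together with a proof of $F^I$ yields $\G^I\seq\bot$ via $(\rar\!E)$; in fact $I\models F$ gives $I\models F^I$, hence $F^I$ is derivable, so $\G^I\seq\bot$, i.e. $\G^I\seq(F\rar G)^I$). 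The rule $(\rar\!E)$ is handled analogously: with premises $\G\seq F$ and $\D\seq F\rar G$ and conclusion $\G,\D\seq G$, if $I\models F\rar G$ the reduct premises are $\G^I\seq F^I$ and $\D^I\seq F^I\rar G^I$, giving $\G^I,\D^I\seq G^I$ directly; if $I\not\models F\rar G$, then $(F\rar G)^I=\bot$, so from $\D^I\seq\bot$ we get $\D^I\seq G^I$ by the contradiction rule, and then weakening gives $\G^I,\D^I\seq G^I=(\G,\D\seq G)^I$. Having verified that every rule preserves the property, the induction is complete.
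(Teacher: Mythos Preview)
Your treatment of the axiom schema, of the $\land$/$\lor$/weakening rules, and of $(\rar\!E)$ matches the paper and is correct. The problem is in $(\rar\!I)$, in the subcase $I\not\models F\rar G$. There you derive $\G^I,F^I\seq\bot$ and then try to discharge $F^I$ by claiming that ``$I\models F$ gives $I\models F^I$, hence $F^I$ is derivable.'' That inference is false: satisfaction by the single interpretation~$I$ does not make $\seq F^I$ a theorem. For instance, if $F$ is an atom $p$ with $p\in I$, then $F^I=p$, and $\seq p$ is certainly not a theorem of the basic system (it is not even tautological, so Proposition~\ref{prop:taut} rules it out). So the step from $\G^I,F^I\seq\bot$ to $\G^I\seq\bot$ is unjustified as written.

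The paper avoids this by splitting on whether $I\models\G$ rather than on whether $I\models F\rar G$. If $I\not\models\G$, Lemma~\ref{lemma1} applied to a falsified member of~$\G$ gives $\G^I\seq(F\rar G)^I$ via $(C)$ and $(W)$. If $I\models\G$, the paper uses soundness (Proposition~\ref{prop:taut}) on the theorem $\G^I,F^I\seq G^I$: since $I$ satisfies $\G^I$, it must satisfy $F^I\rar G^I$, hence $F\rar G$, so $(F\rar G)^I=F^I\rar G^I$ and a single application of $(\rar\!I)$ finishes. In effect, your bad subcase $I\not\models F\rar G$ can only occur when $I\not\models\G$ (this is exactly what Proposition~\ref{prop:taut} applied to the premise $\G,F\seq G$ tells you), and the paper exploits that observation; your argument needs either this appeal to soundness or the paper's alternative case split to go through.
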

\begin{proof}
Consider the property of sequents: ``$S^I$ is a theorem of the basic system.'' 
To prove the lemma, it suffices to show that all theorems of the basic system 
have that property. It is clear that the reduct of every axiom of the basic 
system is a theorem (of the basic system). Verifying that the set of sequents 
with that property is closed under inference rules follows the same pattern 
for all inference rules but those involving implication. Consider, for 
instance, disjunction elimination:
\beq
\r{\G\seq \mathcal H^\lor \qquad \D,H \seq F
    \quad\hbox{for all }H\in\mathcal H}
	    {\G,\D\seq F}
\eeq{de}
and assume that the reducts of all sequents that are premises of that rule 
are theorems. Because $(\mathcal H^\lor)^I$ is 
$(\mathcal H^I)^\lor$, all premises of the disjunction elimination rule:
$$\r{\G^I\seq (\mathcal H^I)^\lor \qquad \D^I,H^I \seq F^I
    \quad\hbox{for all }H\in\mathcal H}
	    {\G^I,\D^I\seq F^I}$$
are theorems. Therefore, so is the sequent $\G^I,\D^I\seq F^I$ and 
consequently, also the sequent $(\G,\D\seq F)^I$. 

Consider now the implication introduction rule:
$$\r	{\G,F\seq G}{\G\seq F\rar G}$$
and assume that the reduct $(\G,F\seq G)^I$
is a theorem. To show that $(\G\Rightarrow F\rightarrow G)^I$ is a theorem 
it suffices to show that $\G^I\Rightarrow(F\rightarrow G)^I$ is a theorem. 

\medskip\noindent
{\em Case 1:} $I$ satisfies $\G$. Since the sequent $(\G,F\seq G)^I$
is a theorem, so is the sequent $\G^I,F^I\seq G^I$. Thus, $\G^I\seq
F^I\rightarrow G^I$ is a theorem and so, $(\G^I)^\land\rightarrow 
(F^I\rightarrow G^I)$ is tautological. Since $I$ satisfies $\G$, the 
comment at the end of Section \ref{sec:sm} implies that $I$ 
satisfies $\G^I$. 
Consequently, $I$ satisfies $F^I\rightarrow G^I$ and, by the same comment
again, also $F\rightarrow G$. It follows that $(F\rar G)^I$ 
is $F^I\rar G^I$. Since the sequent $(\G,F\seq G)^I$ or, equivalently, 
the sequent $\G^I,F^I\seq G^I$ is a theorem, applying the rule
$$\r	{\G^I,F^I\seq G^I}{\G^I\seq F^I\rar G^I}$$
we obtain that $\G^I\Rightarrow F^I\rightarrow G^I$ is a theorem. Thus,
$\G^I\Rightarrow(F\rightarrow G)^I$, is a theorem, too.

\medskip\noindent
{\em Case~2:} $I$ does not
satisfy~$\G$.  Then $I$ does not satisfy one of the elements $H$ of $\G$.
By Lemma \ref{lemma1}, $H^I \seq \bot$ is a theorem, and 
$\G^I\seq (F\rar G)^I$ can be derived from $H^I \seq \bot$ by
rules~$(C)$ and~$(W)$. Thus, it is a theorem.

\medskip
Next, consider the implication elimination rule:
$$\r	{\G \seq F \qquad \D \seq F \rar G}{\G,\D \seq G}$$
and assume that the sequents $(\G \seq F)^I$ and $(\D \seq F \rar G)^I$ are 
theorems. 
We will show that
$(\G,\D\Rightarrow G)^I$ or, equivalently, $\G^I,\D^I\seq G^I$ is a theorem,
too.

\medskip\noindent
{\em Case 1:} $I$ satisfies $F \rar G$. Then $(F \rar G)^I$ is
$F^I \rar G^I$. Thus, the sequents $\G^I \seq F^I$ and $\D^I \seq F^I \rar 
G^I$ are theorems, and the claim follows by applying the rule 
$$\r	{\G^I \seq F^I \qquad \D^I \seq F^I \rar G^I}{\G^I,\D^I \seq G^I}.$$

\medskip\noindent
{\em Case~2:} $I$ does not
satisfy~$F \rar G$.  Then ~$(F \rar G)^I$ is~$\bot$ and so, 
$\D^I\seq\bot$ is a theorem. Moreover,
$\G^I,\D^I\seq G^I$ can be derived from $\D^I \seq \bot$ by
rules~$(C)$ and~$(W)$. Thus, $\G^I,\D^I\seq G^I$ is a theorem, too.
\end{proof}

\begin{proof}[Proof of Main Theorem]

(a)~Assume that~$F$ is
a theorem of the basic system.  By Lemma \ref{lemma2}, for any 
interpretation~$I$, 
$F^I$ is a theorem of the basic system, and consequently is tautological,
by Proposition~\ref{prop:taut}.
It follows that $\mathcal{H}^I$ and $(\mathcal{H}\cup {F})^I$ 
are satisfied by the same interpretations. 

(b)~Assume that~$F$ is equivalent to~$G$ in the basic system, that
is, $F \lrar G$ is a theorem of the basic system. By Lemma 
\ref{lemma2}, for every interpretation $I$, $(F \lrar G)^I$ is a 
theorem of the basic system. Moreover, by Proposition 
\ref{prop:taut}, $F \lrar G$ is tautological. Thus $(F \lrar G)^I
= F^I \lrar G^I$ and so, $F^I \lrar G^I$ is a theorem of the 
basic system. Consequently, $F^I \lrar G^I$ is tautological, that
is $F^I$ and $G^I$ are equivalent. 
It follows that $(\mathcal{H}\cup {F})^I$ and $(\mathcal{H}\cup {G})^I$ 
are satisfied by the same interpretations. 
\end{proof}

\section{Examples Involving Aggregates} \label{sec:agg}

As discussed in the introduction, infinitary formulas can be used
to precisely define the semantics of aggregates in ASP when the
Herbrand universe is infinite. In this section, we give two examples
demonstrating how the theory described above can be applied to prove 
equivalences between programs involving aggregates.

\begin{example}\label{e5}
Intuitively, the rule 
\beq
q(X) \ar 1\{p(X,Y)\}
\eeq{ag1}
has the same meaning as the rule 
\beq
q(X) \ar p(X,Y).
\eeq{r1}
To make this claim precise, consider first the result of grounding 
rule (\ref{ag1}) under the assumption that the Herbrand universe $C$
 is finite. 
In accordance with standard practice in ASP, we treat variable $X$ as global
and $Y$ as local. Then the result of grounding (\ref{ag1}) is the set of ground
rules
$$
q(a) \ar 1\{p(a,b)\ |\ b \in C\} 
$$
for all $a \in C$. In the spirit of the semantics
for aggregates proposed by \citeANP{fer05} [\citeyearNP{fer05}, Section~4.1]
these rules
have the same meaning as the propositional formulas
\beq
\left ( \bigvee_{b \in C} p(a, b) \right ) \rar q(a).
\eeq{s1}
Likewise, rule (\ref{r1}) can be viewed as shorthand for the set of formulas
\beq
p(a, b) \rar q(a) 
\eeq{s2}
for all $a,b \in C$. It easy to see that these sets of formulas are
intuitionistically equivalent.  
\end{example}

How can we lift the assumption that the Herbrand universe is finite? We can treat 
(\ref{s1}) as an infinitary formula, and show that the conjunction of formulas
(\ref{s1}) is equivalent to the conjunction of formulas (\ref{s2}) in the basic
system. The fact that  the conjunction of formulas (\ref{s2}) for all 
$b \in C$  is equivalent to (\ref{s1}) in the basic system follows 
from Example~\ref{e2} (Section \ref{sec:basic}). 

\begin{example}\label{e6}
Intuitively,
\beq
q(X) \ar 2\{p(X,Y)\}
\eeq{ag2}
has the same meaning as the rule 
\beq
q(X) \ar p(X,Y1),\ p(X, Y2),\ Y1 \not = Y2.
\eeq{r2}
To make this claim precise, consider the infinitary formulas corresponding to 
(\ref{ag2}): 
\beq
\left (\bigvee_{b \in C}p(a,b) \land \bigwedge_{b \in C} 
\left ( p(a, b) \rar \bigvee_{c \in C \atop c \not = b} p(a, c) 
\right ) \right ) \rar q(a)
\eeq{f0}
($a \in C$); see \cite[Section~4.1]{fer05} for details on representing 
aggregates with propositional formulas. The formulas corresponding
to~(\ref{r2}) are 
\beq
(p(a,b) \land p(a,c)) \rar q(a)  
\eeq{f1}
($a,b,c \in C,\ b \not = c$). We will show that the conjunction of 
formulas (\ref{f0}) is equivalent to the conjunction of formulas (\ref{f1}) in 
the basic system.

It is sufficient to check that for every $a \in C$, (\ref{f0}) is 
equivalent to the conjunction of formulas (\ref{f1}) over all $b,c \in 
C$ such that $b \neq c$. By Example \ref{e2}, this conjunction is 
intuitionistically equivalent to 
\beq
\left ( \bigvee_{b,c \in C \atop b \not = c} (p(a,b) \land p(a,c))
\right ) \rar q(a).
\eeq{f2}
By the replacement property of infinitary formulas, it suffices to check that 
the antecedents of (\ref{f0}) and (\ref{f2}) are equivalent to each other.

Left-to-right: assume 
\beq
\bigvee_{b \in C}p(a,b) \land \bigwedge_{b \in C} \left (
p(a, b) \rar \bigvee_{c \in C \atop c \not = b} p(a, c) \right ).  
\eeq{ante}
Then $\bigvee_{b \in C} p(a, b)$. We will
reason by cases, with one case corresponding to each possible value $b_0$
of $b$. Case $p(a, b_0)$: by the second conjunctive term of (\ref{ante}), 
$$
p(a, b_0) \rar  \bigvee_{c \in C \atop c \not = b_0} p(a,c).
$$
Then the consequent of this implication follows.
Again we will reason by cases, with one case for each value
$c_0$ of $c$ where $c_0 \neq b_0$. Case $p(a, c_0)$: then $p(a, b_0) \land 
p(a, c_0)$. Consequently 
\beq
\bigvee_{b,c \in C \atop b \not = c} p(a,b) \land p(a,c).
\eeq{ante2}

Right-to-left: assume (\ref{ante2}). 
We reason by cases, with one case for each pair $b_0,\ c_0$,
where $b_0 \neq c_0$. Case $p(a, b_0) \land p(a,c_0)$: from $p(a, b_0)$ 
we derive the first conjunctive term of (\ref{ante}); 
from $p(a, c_0)$ we derive $$\bigvee_{c \in C,
\atop c \neq b} p(a, c),$$ and consequently the implication  
$$p(a, b) \rar \bigvee_{c \in C \atop c \not = b} p(a, c).$$ 
The conjunction of these implications for all $b \in C$ is the second 
conjunctive term of~(\ref{ante}).
\end{example}
 
\section{The Extended System of Natural Deduction} \label{sec:ext}

In this section, we show that the assertion of the main theorem will remain
true if we extend the basic system by adding the axiom schema 
\beq
F \lor ( F \rar G ) \lor \neg G
\eeq{hosoi}
characterizing (in the finite case) the logic of here-and-there \cite{hos66}, and the 
converses to the implications discussed at the end of Section 
\ref{sec:basic}:
\beq
\neg\bigwedge_{F \in \mathcal H} F \rar \bigvee_{F \in \mathcal H}\neg F,  
\eeq{dem}
\beq
\left ( \bigwedge_{i \in I}\ \ \bigvee_{F \in \mathcal{H}_i} F \right ) \rar 
\left ( \bigvee_{\{F_i\}_{i \in I}}\ \  \bigwedge_{i \in I} F_i \right ),
\eeq{dist_cod}
and
\beq
\left ( \bigwedge_{\{F_i\}_{i \in I}}\ \  \bigvee_{i \in I} F_i \right ) \rar
\left ( \bigvee_{i \in I}\ \ \bigwedge_{F \in \mathcal{H}_i} F \right ).   
\eeq{dist_doc}
When all conjunctions and disjunctions are finite, formula (\ref{dem})
can be derived intuitionistically from (\ref{hosoi}), and (\ref{dist_cod})
and (\ref{dist_doc}) are intuitionistically provable. We do not know 
to what extent the additional axiom schemas postulated here are independent
when infinite conjunctions and disjunctions are allowed.   

In the extended system, we can derive the theorem 
\beq
\bigvee_{J \subseteq I}\left ( \neg \bigvee_{j \in I \setminus J} F_j \land 
\neg \neg \bigwedge_{j \in J} F_j \right )
\eeq{iwem}
for any non-empty bounded family $\{F_i\}_{i \in I}$ of formulas. (This is a 
generalization of the weak law of the excluded middle $ \neg F \lor \neg \neg F$
to sets of infinitary formulas, similar to the generalization of the law of the 
excluded middle given in Remark \ref{ft}. It is equivalent 
in the basic system to the special case of~(\ref{iwem}) corresponding to
a family with a single element.) Indeed 
$$
\bigwedge_{i \in I}\left (\neg F_i \lor \neg \neg F_i \right )
$$
is a theorem of the extended system because $\neg F_i \land \neg \neg F_i$ can
be intuitionistically derived from (\ref{hosoi})  with $F_i$ as $F$ and $\neg
F_i$ as $G$. Using (\ref{dist_cod}) 
we obtain 
$$
\bigvee_{J \subseteq I}\left ( \bigwedge_{j \in I \setminus J}
\neg F_j \land \bigwedge_{j \in J} \neg \neg F_j \right ),
$$
and (\ref{iwem}) follows by De Morgan's laws. 

In the extended system, we can also derive the theorem
\beq
\left ( F \rar \bigvee_{i \in I} G_i \right ) \rar \bigvee_{i \in I}
\left ( F \rar G_i \right ) 
\eeq{dist_ioo}
for any formula $F$ and non-empty family $\{G_i\}_{i \in I}$ of formulas. 
We use instantiations of (\ref{hosoi}) for all $G_i$ to obtain
$$
\bigwedge_{i \in I} F \lor (F \rar G_i) \lor \neg G_i. 
$$
By (\ref{dist_cod}) we obtain
\beq
\bigvee_{\{F_i\}_{i \in I}}\,\bigwedge_{i \in I} F_i 
\eeq{dist_ioo2}
where the disjunction extends over all elements 
$\{F_i\}_{i \in I}$ of the Cartesian product of the family $\{F, F \rar G_i, 
\neg G_i\}_{i \in I}$. We reason by cases, with one case corresponding to 
each disjunctive term $\bigwedge_{i \in I} F_i$ of (\ref{dist_ioo2}). 
If at least one of the formulas $F_i$ is $F$ then from the antecedent of 
(\ref{dist_ioo}) we can derive $\bigvee_{i \in I} G_i$, and the consequent 
of (\ref{dist_ioo}) immediately follows. If at least one of the formulas
$F_i$ is $F \rar G_i$ then the consequent of (\ref{dist_ioo}) is immediate
as well.  Otherwise, $\bigwedge_{i \in I} F_i$ is $\bigwedge_{i \in I} \neg G_i$. 
Then from the antecedent of (\ref{dist_ioo}) we can derive $\neg F$ 
and every disjunctive term of the consequent follows. 

It is easy to check that the properties of the basic system proved in 
Section \ref{sec:properties} hold for the extended system as well.

To show that the assertion of the main theorem applies to the extended
system we will prove the modification of Lemma \ref{lemma2} stated below.
The {\sl classical extended system} is obtained from the extended system by
replacing the axiom schema~(\ref{hosoi}) with the law of the excluded
middle~(\ref{lem}).

\begin{lemma}  \label{lemma3}
For any sequent~$S$ and any interpretation~$I$, if~$S$ is a theorem of the
extended system then $S^I$ is a theorem of the classical extended system.
\end{lemma}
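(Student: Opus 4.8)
The plan is to mimic the proof of Lemma~\ref{lemma2}, proceeding by the same ``property of sequents'' induction: consider the property ``$S^I$ is a theorem of the classical extended system,'' and show that every theorem of the \emph{extended} system has this property. Since the extended system is obtained from the basic system by adding the axiom schemas (\ref{hosoi}), (\ref{dem}), (\ref{dist_cod}), and (\ref{dist_doc}), and the classical extended system replaces (\ref{hosoi}) by the law of the excluded middle (\ref{lem}), the argument has two parts. First, all the work already done in the proof of Lemma~\ref{lemma2} carries over verbatim: the reduct of the one axiom schema $F\seq F$ is a theorem, and the closure of the property under the introduction/elimination rules for $\land$, $\lor$, $\rar$ and under $(W)$ is checked exactly as before (the reduct commutes with $\land$ and $\lor$, and the two cases for the implication rules go through because $(C)$ and $(W)$ are available, and these rules together with (\ref{lem}) — which subsumes (\ref{hosoi}) — are all present in the classical extended system). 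So the only genuinely new obligation is to show that the reduct of each of the new axiom schemas (\ref{hosoi}), (\ref{dem}), (\ref{dist_cod}), (\ref{dist_doc}) is a theorem of the classical extended system.

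For the new axioms I would argue as follows. The reduct of an instance of (\ref{dem}) is again an instance of (\ref{dem}): since reduct commutes with $\land$ and $\lor$, and $\bot^I=\bot$, so $(\neg\bigwedge_{F\in\mathcal H}F)^I$ is either $\bot$ (in which case the whole implication reduces to a theorem, by $(C)$ and $(W)$) or $\neg\bigwedge_{F\in\mathcal H}F^I$, and the consequent reduces to $\bigvee_{F\in\mathcal H}\neg(F^I)$ up to the same $\bot$-vs-$(F^I{\rar}\bot)$ dichotomy on each disjunct; in all cases the resulting sequent is derivable in the classical extended system from an instance of (\ref{dem}) together with intuitionistically available reasoning. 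The same pattern handles (\ref{dist_cod}) and (\ref{dist_doc}): the reduct of an instance is, after simplifying reducts of subformulas, either trivially a theorem because the antecedent reduces to $\bot$, or itself an instance of the same distributivity schema (since reducts distribute over the Cartesian-product construction — $(\{F_i\}_{i\in I})^I$ ranges over the product of the reduced families). Finally, the reduct of an instance of (\ref{hosoi}), namely of $F\lor(F\rar G)\lor\neg G$: here $(F\rar G)^I$ is either $\bot$ or $F^I\rar G^I$, and $\neg G$ behaves the same way; in every case the reduced sequent is $F^I\lor\psi_1\lor\psi_2$ where $F^I$ together with the law of the excluded middle is enough to derive it classically — this is the one place where we genuinely need (\ref{lem}) rather than merely (\ref{hosoi}), since the reduct of (\ref{hosoi}) need not itself be an instance of (\ref{hosoi}).

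The main obstacle I anticipate is the bookkeeping in the case of (\ref{hosoi}): one must carefully enumerate the possibilities for whether $I\models F\rar G$ and $I\models\neg G$, and in each sub-case produce the reduced disjunction from (\ref{lem}) (applied to $F^I$, say) plus $(C)$, $(W)$, and $(\lor I)$. A secondary subtlety is making sure that the ``$\bot$ case'' of each reduct — when some subformula reduces to $\bot$ because $I$ fails to satisfy it — is uniformly discharged; Lemma~\ref{lemma1} and the contradiction rule handle this, exactly as in the proof of Lemma~\ref{lemma2}, so the recurring move is ``if the antecedent reduces to $\bot$, apply $(C)$ then $(W)$.'' Once these axiom cases are in hand, the closure-under-inference-rules part is a direct transcription of the proof of Lemma~\ref{lemma2}, with ``basic system'' replaced by ``classical extended system'' throughout, so I would state that portion briefly rather than repeat it in full.
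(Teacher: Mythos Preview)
Your approach is correct and matches the paper's: induct on theorems of the extended system, reuse the closure-under-rules argument from Lemma~\ref{lemma2} verbatim, and check only the reducts of the new axiom schemas~(\ref{hosoi})--(\ref{dist_doc}).

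One simplification you miss, though, makes the axiom cases much cleaner than your sketch suggests. The formulas~(\ref{dem}), (\ref{dist_cod}), (\ref{dist_doc}) are \emph{tautological}; hence $I$ always satisfies them and the reduct of each such implication is simply the implication of the reducts of its antecedent and consequent --- no ``antecedent reduces to $\bot$'' case ever arises at the level of the whole implication. For (\ref{dist_cod}) and (\ref{dist_doc}) this immediately yields another instance of the same schema, since reducts commute with $\land$ and $\lor$. For (\ref{dem}) the paper does not try to recover an instance of~(\ref{dem}) at all (your ``reduct is again an instance of~(\ref{dem})'' is not literally true and leads you into unnecessary hedging); instead it splits on whether $I$ satisfies $\bigwedge_{F\in\mathcal H}F$: if so, the antecedent's reduct is $\bot$; if not, some $F_0$ has $I\not\models F_0$, so $(\neg F_0)^I=\neg F_0^I$ is a theorem by Lemma~\ref{lemma1}, giving the consequent outright. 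Similarly, for~(\ref{hosoi}) the paper splits only on $I\models G$ versus $I\not\models G$ (which already determines the relevant reducts), rather than on the four combinations you propose; in the $I\not\models G$ case the third disjunct $(\neg G)^I=\neg G^I$ is a theorem by Lemma~\ref{lemma1}, and in the $I\models G$ case the disjunction follows from the excluded middle $F^I\lor\neg F^I$.
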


\begin{proof}
It suffices to show that every theorem $S$ of the
extended has this property: ``$S^I$ is a theorem of the classical extended
system.'' We only need to check that the reducts of the axioms 
(\ref{hosoi})--(\ref{dist_doc}) have this property; the fact that 
the set of sequents with that property is closed under the inference rules
is checked in the same way as in the proof of Lemma \ref{lemma2}. 

Let $S$ be (\ref{hosoi}). Then $S^I$ is 
$$F^I \lor (F \rar G)^I \lor (\neg G)^I.$$ If $I \models G$ then the 
second disjunctive term is $F^I \rar G^I$, and the disjunction can be 
derived from $F^I \lor \neg F^I$. If $I \not \models G$ then the third 
disjunctive term is equivalent to $\neg \bot$.    

Let $S$ be (\ref{dem}). Since $S$ is tautological, $S^I$ is  
$$
\left ( \neg \bigwedge_{F \in \mathcal H} F \right )^I \rar \bigvee_{F \in 
\mathcal H} (\neg F)^I.  
$$
If $I$ satisfies the conjunction in the 
antecedent, then the antecedent is $\bot$. Otherwise, at least one 
disjunctive term in the consequent is equivalent to~$\neg \bot$.  
 
Let $S$ be (\ref{dist_cod}). Since $S$ is tautological, $S^I$ is
$$
\left ( \bigwedge_{i \in I}\ \ \bigvee_{F \in \mathcal{H}_i} F^I \right ) \rar 
\left ( \bigvee_{\{F_i\}_{i \in I}}\ \  \bigwedge_{i \in I} F_i^I \right ), 
$$
which is an instantiation of the same axiom schema. The reasoning for sequents of 
the form (\ref{dist_doc}) is similar. 
\end{proof} 

\noindent{\sl Main Theorem for the Extended System}\\
\noindent For any set $\mathcal{H}$ of formulas, 
\begin{itemize}
  \item[(a)] if a formula $F$ is a theorem of the extended system then $\mathcal{H} 
  \cup \{F\}$ has the same stable models as $\mathcal{H}$;
  \item[(b)] if $F$ is equivalent to~$G$ in the extended system then~$\mathcal{H} 
  \cup \{F\}$ and~$\mathcal{H} \cup \{G\}$ have the same stable models.
\end{itemize}

This assertion is derived from Lemma \ref{lemma3} in the same way that
the Main Theorem was derived from Lemma \ref{lemma2}, using the fact that
all theorems of the classical extended system are tautological. 

\begin{example}\label{e7}
Intuitively, the cardinality constraint 
$\{p(X)\}0$ 
(``the set of true atoms with form $p(X)$ has cardinality at most 0'') has 
the same meaning as the conditional literal 
$\bot : p(X)$ (``for all $X$, $p(X)$ is false'').  If we represent this
conditional literal by the infinitary formula
\beq
\bigwedge_{a \in C} \neg p(a) 
\eeq{f8r}
then this claim can be made precise by showing that~(\ref{f8r}) is
equivalent in the extended system to
the infinitary formula corresponding to $\{p(X)\}0$ in the sense of
\cite{fer05}:
\beq
\bigwedge_{A \subseteq C \atop A \not = \emptyset} \left ( \bigwedge_{a \in A} 
p(a)  \rar  \bigvee_{a \in C \setminus A} p(a) \right )
\eeq{f8l}
(where $C$ is the Herbrand universe).

It is easy to derive (\ref{f8l}) from (\ref{f8r}) in the basic system. 
The derivation of~(\ref{f8r}) from~(\ref{f8l}) will use the following 
instance of (\ref{iwem}): 
\beq
\bigvee_{A \subseteq C}\left (\neg \bigvee_{a \in C \setminus A} p(a) \land \neg 
\neg \bigwedge_{a \in A} p(a) \right ).
\eeq{iwemi}
We will reason by cases, with one case 
corresponding to each disjunctive term~$D_A$ in~(\ref{iwemi}).  
In the case that $A$ is 
empty,~(\ref{f8r}) follows from the first conjunctive term of~$D_A$ by 
De Morgan's law.
Otherwise,  assume $ \bigwedge_{a \in A} p(a)$. 
Then by~(\ref{f8l}), $\bigvee_{a \in C \setminus A} p(a)$, 
which contradicts the first conjunctive term of $D_A$. We
conclude $\neg \bigwedge_{a \in A} p(a)$, 
which contradicts the second conjunctive term of $D_A$.
So the assumptions~$D_A$ and~(\ref{f8l}) are 
contradictory. Consequently, they imply~(\ref{f8r}).
\end{example}

\section{Conclusion}

Two finite propositional formulas are strongly equivalent if and only if they
are equivalent in the logic of here-and-there \cite[Proposition~2]{fer05}.
The results of this note are similar to the if part of that theorem. 
We don't know how to extend the only if part to infinitary formulas.  It 
is not 
clear whether or not any axioms or inference rules not included in the 
extended system will be required. However, as we illustrated with
several examples, the results in this paper allow us to verify 
the equivalence of formulas involving aggregates. 

\section*{Acknowledgements}

Thanks to Fangkai Yang and to the anonymous referees for comments.

\bibliographystyle{acmtrans}
\bibliography{bib}

\end{document}